\newcommand{\rvline}{\hspace*{-\arraycolsep}\vline\hspace*{-\arraycolsep}}
\newcommand{\di}{{\rm d}}
\newcommand{\C}{{\sf C}}
\newcommand{\F}{{\sf F}}
\newcommand{\U}{{\sf U}}
\newcommand{\V}{{\sf V}}
\newcommand{\Y}{{\sf Y}}
\renewcommand{\L}{{\sf L}}
\renewcommand{\S}{{\sf S}}
\newcommand{\R}{{\sf R}}
\newcommand{\Z}{{\sf Z}}
\newcommand{\x}{\boldsymbol{x}}
\theoremstyle{definition}
\theoremstyle{remark}
\theoremstyle{plain}
\newtheorem*{theorem*}{Theorem}
\newtheorem*{lemma}{Lemma}
\newcommand\reallywidehat[1]{%
\savestack{\tmpbox}{\stretchto{%
  \scaleto{%
    \scalerel*[\wi\di thof{\ensuremath{#1}}]{\kern-.6pt\bigwedge\kern-.6pt}%
    {\rule[-\textheight/2]{1ex}{\textheight}}
  }{\textheight}%
}{0.5ex}}%
\stackon[1pt]{#1}{\tmpbox}%
}
\begin{document}

\title{Qubit fidelity under stochastic Schr\"{o}dinger equations driven by colored noise}

\author{R.J.P.T. \surname{de Keijzer}}
\affiliation{Department of Applied Physics, Eindhoven University of Technology, P. O. Box 513, 5600 MB Eindhoven, The Netherlands}
\affiliation{Eindhoven Hendrik Casimir Institute, Eindhoven University of Technology, P. O. Box 513, 5600 MB Eindhoven, The Netherlands}
\altaffiliation[Corresponding author: ]{r.j.p.t.d.keijzer@tue.nl }

\author{L.Y. \surname{Visser}}
\author{O. \surname{Tse}}
\affiliation{Department of Mathematics and Computer Science, Eindhoven University of Technology, P.~O.~Box 513, 5600 MB Eindhoven, The Netherlands}
\affiliation{Eindhoven Hendrik Casimir Institute, Eindhoven University of Technology, P. O. Box 513, 5600 MB Eindhoven, The Netherlands}

\author{S.J.J.M.F. \surname{Kokkelmans}}
\affiliation{Department of Applied Physics, Eindhoven University of Technology, P. O. Box 513, 5600 MB Eindhoven, The Netherlands}
\affiliation{Eindhoven Hendrik Casimir Institute, Eindhoven University of Technology, P. O. Box 513, 5600 MB Eindhoven, The Netherlands}

\date{\today}

\begin{abstract}
Environmental noise on a controlled quantum system is generally modeled by a dissipative Lindblad equation. This equation describes the average state of the system via the density matrix $\rho.$ One way of deriving this Lindblad equation is by introducing a  stochastic operator evolving under white noise in the Schr\"{o}dinger equation. However, white noise, where all noise frequencies contribute equally in the power spectral density, is not a realistic noise profile as lower frequencies generally dominate the spectrum. Furthermore, the Lindblad equation does not fully describe the system as a density matrix $\rho$ does not uniquely describe a probabilistic ensemble of pure states $\{\psi_j\}_j$. In this work, we introduce a method for solving for the \textit{full distribution} of qubit fidelity driven by important stochastic Schr\"{o}dinger equation cases, where qubits evolve under more realistic noise profiles, e.g.\ \textit{Ornstein-Uhlenbeck noise}. This allows for predictions of the mean, variance, and higher-order moments of the fidelities of these qubits, which can be of value when deciding on the allowed noise levels for future quantum computing systems, e.g.\ deciding what quality of control systems to procure. Furthermore, these methods will prove to be integral in the \textit{optimal control} of qubit states under (classical) control system noise. 
\end{abstract}

\maketitle

\section{Introduction}
\label{sec:introduction}

For quantum computers to become important tools in highly complex simulation/optimization problems, they should be able to perform many successive operations faithfully, i.e.\ be able to manipulate the qubits with high fidelity. In the current noisy intermediate-scale quantum (NISQ) era \cite{Preskill_2018}, quantum computers suffer from noise stemming from both uncontrollable sources such as radiative decay, and from sources in control systems, such as frequency and intensity noises \cite{psdlimit,noise2,noise3}. One way of addressing the latter is by designing control operations that are robust against these types of noise \cite{madhav,sven}. Alternatively, these controllable noise sources can be diminished by increasing the quality of the systems, which often comes at both monetary and technological complexity costs. It is therefore important to know in what way noise is determinantal to the fidelity of the qubit operations. A model that predicts fidelities based on noise spectra can be used to ensure that a system meets certain criteria for qubit fidelities, before being acquired. Furthermore, such a model can be used to \textit{control} qubit systems \cite{vqoc,meitei,noiselindblad}, where the noise affecting each state preparation is accounted for.

\smallskip

Commonly, noisy quantum systems can be regarded as open quantum systems, and are often modeled by the Lindblad master equation \cite{lindblad1}. These equations describe the effective Markovian behavior of a quantum state and can be derived by various approaches \cite{lindblad3} (cf.\ Sec.~\ref{sec:SSE} for an example). However, Markovianity is in general an unrealistic assumption for a quantum system subjected to control noise, where lower frequencies often dominate the power spectral density (PSD) \cite{psdwelch}. Moreover, a density matrix does not completely characterize the state of the system, as two ensembles of states $\{\psi_j\}_j$ might share a common density matrix $\rho$. For example, consider two ensembles $\{\{\psi_1=|0\rangle,\ p_1=1/2\},\{\psi_2=|1\rangle,\ p_2=1/2\}\}$ and $\{\{\psi_1=|+\rangle,\ p_1=1/2\},\{\psi_2=|-\rangle,\ p_2=1/2\}\}$ (where $|+\rangle$ and $|-\rangle$ are the eigenstates of the Pauli $X$ operator, then these would be described by the same density matrix $\rho$ but would behave differently in subsequent evolution. Therefore, one might not only be interested in the effective (or average) state of the system but also in the variances of the evolution, which indicate the variety of states being prepared.

\begin{figure}[b]
    \centering
    \includegraphics[scale=0.43]{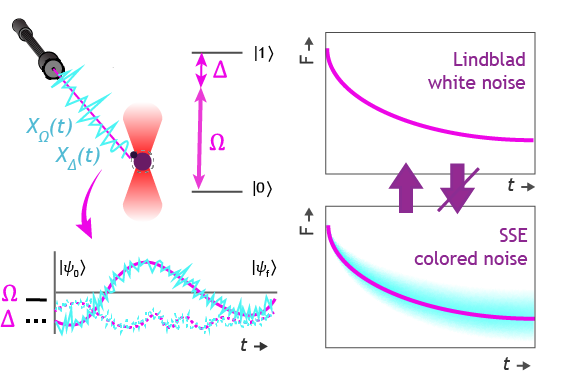}
    \caption{Illustration of control system laser noise $X^\Omega$ and $X^\Delta$ respectively acting on qubit Rabi frequency $\Omega$ and detuning $\Delta$ causing loss of fidelity $\F$. The SSE, which implies the Lindblad equation for white noise, can be used to treat colored noise and to find a distribution of fidelities. Conversely, the Lindblad equation method is only valid for white noise and results in an average fidelity rather than a full distribution.}
    \label{fig:enter-label}
\end{figure}

\smallskip

Because of these shortcomings in the Lindblad equation, we instead consider the \textit{stochastic Schr\"{o}dinger equation} (SSE) as a starting point for our analysis, allowing us to model the evolution of a quantum state under more general noise profiles. Using this model, we derive stochastic ODE systems for the qubit fidelity distribution ${\sf F}$ of qubit states under (classical) control system noise that we verify numerically with Monte-Carlo methods. These ODE systems are substantially simpler and less computationally demanding than their Monte Carlo counterparts.

\smallskip

In line with this work, other studies have been performed to understand the influence of noise on qubit fidelity \cite{saffmannwhitenoise,psdlimit,Green_2013}, where the Lindblad equation is considered primarily in the small noise regime, without providing the full qubit fidelity distribution. Similarly, past works on the SSE for quantum computing \cite{sse1,sse2, boundspaper,boundspaper2,psdpaper} only provide lower bounds or approximations on the achievable fidelities of qubit state preparations. As only white noise and mean fidelities are considered, these calculations could also have been performed using the Lindblad equation. To go beyond the Lindblad equation, these SSE methods are expanded in this work to obtain exact results for more realistic noise acting on both single and multiple qubits. For this, we extend the colored noise methods for the SSE developed in \cite{colorednoisepaper} to the class of {\em It\^o processes} having finite quadratic variation \cite{semimartingale,wong1971representation}.

\smallskip

The layout of this paper is as follows. Sec.~\ref{sec:SSE} describes the SSE, its relation to noise induced by the control mechanism, and the Lindblad equation in the case of white noise.  Sec.~\ref{sec:method} details our ODE approach and the numerical scheme used to solve the SSE equations. In Sec.~\ref{sec:results}, we show initial results for our model on single and multiple qubits, using white and colored noise.

\section{Stochastic Schr\"{o}dinger Equation}
\label{sec:SSE}

The standard Schr\"{o}dinger equation describing the evolution of a pure state $\phi$ is given by \footnote{In this work, we choose to refrain from using standard bra-ket notation for readability purposes.}
\begin{equation}\label{eq:se}
    \di \phi=-iH\phi\, \di t,
\end{equation}
where $H=H(t)$ is a time-dependent Hamiltonian, which is commonly the sum of a drift Hamiltonian and a control Hamiltonian. Most single qubit systems can be controlled using a \emph{coupling operator} 
\[
	H_{coup} = e^{i\varphi}|0\rangle\langle1|+e^{-i\varphi}|1\rangle\langle0|,
\]
and a \emph{detuning operator}
\[
	H_{det} = (I-\sigma_Z)/2=|1\rangle\langle1|,
\]
which in combination gives the control Hamiltonian \cite{vqoc}
\begin{equation}
\label{eq:hamiltonian}
    H(t)=\Omega(t) H_{coup}(t) +\frac{1}{2}\Delta(t)H_{det},
\end{equation}
where $\Omega$ is the Rabi frequency, $\varphi$ is the phase and $\Delta$ is the detuning. In for instance neutral atom systems, these are respectively related to the intensity, phase, and frequency of the control lasers \cite{noiseneutralatom}. The coupling and detuning will always include noise in both the intensity, phase, and frequency, such that the effective coupling, phase, and detuning are given by $\tilde{\Omega}=\Omega+\di X^\Omega$, $\tilde{\varphi}=\varphi+\di X^\varphi$  and $\tilde{\Delta}=\Delta+\di X^\Delta$, where $X^\eta=(X_t^\eta)_{t\ge 0}$, $\eta\in\{\Omega,\varphi,\Delta\}$ are the classical noise sources.

\smallskip

We consider noise sources $X=(X_t)_{t\ge 0}$ that are It\^o processes with finite quadratic variation \cite{ito,quadvar} given by
\[
	[X]_t = \int_0^t \gamma_s^2\,ds.
\]
Here, $\gamma$ is a predictable process that is integrable w.r.t.\ the Brownian motion. In this work, we will only consider processes with $[X]_t = \gamma^2 t$, for some constant $\gamma>0$. Including a noise source in the standard Sch\"{o}dinger equation results in the SSE for the state $\psi$ \cite{Barchielli_2010,colorednoisepaper} (see App.~\ref{app:ssederivation} for derivation)
\begin{equation}\label{eq:sse}
	\di \psi=-iH\psi\, \di t-\frac{1}{2} S^2 \di [X]_t - i S\psi\, \di X_t,\tag{SSE}
\end{equation}
where $S=S^\dag$ is a Hermitian noise operator. 

\smallskip

In this work, we distinguish between three classes of noise operators $S$ in \eqref{eq:sse},
\begin{align*}
    \text{Pauli noise:}&\quad [H,S]=0,\quad S^2=I\\ 
    &\quad\text{(e.g. } S=\sigma_X, \sigma_Y, \sigma_Z \text{),}\\[0.5em]
    \text{Projection noise:}&\quad [H,S]=0, \quad S^2=S\\
    &\quad\text{(e.g. } S=(I-\sigma_Z)/2\text{).}\\[0.5em]
    \text{Non-commuting:}&\quad H=\alpha\sigma_1,\quad S=\sigma_2, \\
    &\quad [H,S]=2i\alpha\sigma_3,
\end{align*}

Thus, following \eqref{eq:hamiltonian}, Pauli noise would model noise in the intensity of the control (or phase via a formal expansion of the exponential function), and projection noise would model noise in the frequency of the control. For the non-commuting case, $\{\sigma_1,\sigma_2,\sigma_3\}$ could for example be any cyclic permutation of $\{\sigma_X,\sigma_Y,\sigma_Z\}$, and can be used to model noise when the detuning and Rabi frequency are of the same order of magnitude.

\smallskip

For our simulation results, we consider the following two examples of noise processes:

\smallskip

\paragraph*{White noise (WN) process.} Here, $X$ takes the form
\begin{equation}
	X_t = \gamma \int_0^t \di W_s,\qquad \gamma\geq 0,
\end{equation}
where $W=(W_t)_{t\ge 0}$ is the standard Brownian motion, and has the quadratic variation $[X]_t=\gamma\, \di t$. Hence, the corresponding SSE reads
\begin{equation}\label{eq:sse-wn}
	\di \psi=-iH\psi \,\di t-\frac{1}{2}\gamma^2 S^2\, \di t - i\gamma S\psi\, \di W_t\tag{WN}.
\end{equation}
Using \eqref{eq:sse-wn} to deduce the evolution of the state $\psi\psi^\dagger$ and taking the expectation results in the Lindblad equation  \cite{lindblad1,lindblad2} (see App.~\ref{app:ssederivation})
\begin{equation}
\label{eq:lindblad}
    \partial_t \rho = -i \left[H, \rho \right] + S \rho S^\dag - \frac{1}{2}\bigl\{S^\dag S, \rho \bigr\},
\end{equation}
where $\rho=\mathbb{E}[\psi\psi^\dag]$.

\smallskip

\paragraph*{Ornstein-Uhlenbeck (OU) process.} For any constant $k>0$, the (OU) process is given by
\begin{equation}
	\di X_t=-kX_t\,\di t+\gamma\,\di W_t,
\end{equation}
subjected to either the calibrated initial data, $X_0=0$, or distributed according to the stationary distribution of the (OU) process, i.e.\ $X_0\sim \gamma\mathcal{N}/\sqrt{2k}$, where $\mathcal{N}$ is the standard normal distribution. The (OU) process can be seen as a damped (WN) process according to the fluctuation-dissipation relation, or conversely, (WN) is (OU) with $k\downarrow0$ \cite{oufluc}. The SSE with (OU) noise reads
\begin{equation}\label{eq:sse-ou}
	\di \psi=-iH\psi \,\di t - \frac{1}{2}\gamma^2 S^2\psi \,\di t - i S\psi\,\di X_t.
\end{equation}
Note that it has the same quadratic variation as (WN).

\begin{figure}
    \centering
    \includegraphics[scale=0.45]{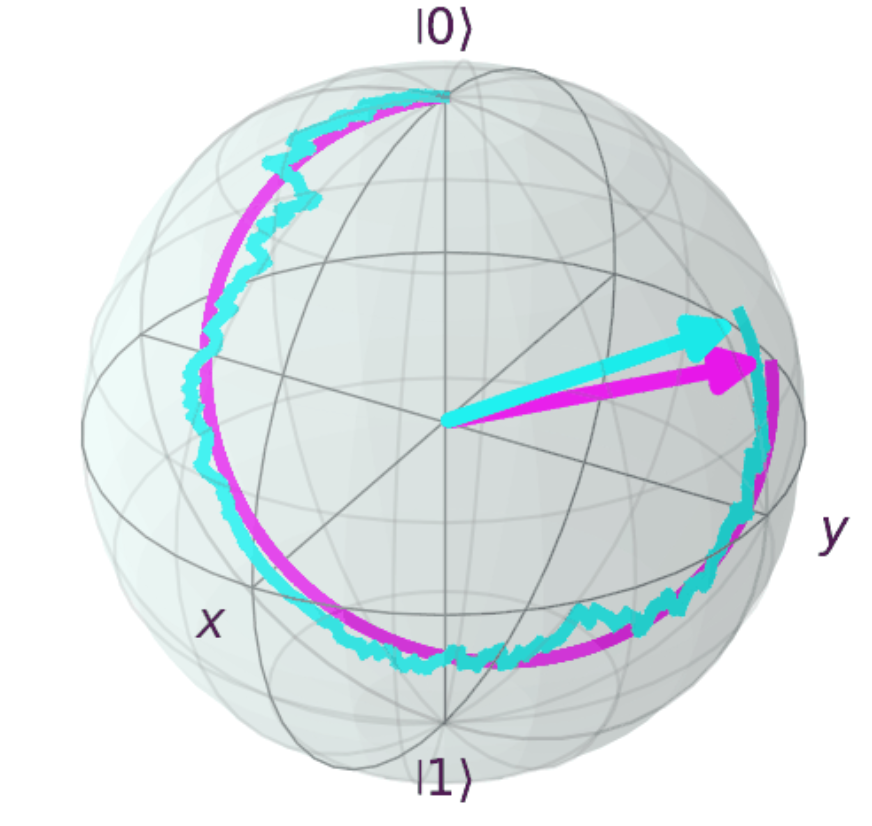}
    \caption{Example of trajectories of state evolutions on the Bloch sphere with $H=X+Z$ without noise (pink), and with Pauli (OU) noise with $S=X, k=0.1, \gamma=0.2$ (cyan).}
    \label{fig:noisyblochsphere}
\end{figure}

\section{Analytical methods}
\label{sec:method}
In this section, we describe in detail the methods used to solve stochastic Schr\"{o}dinger equations for qubit systems. Since the numerical integration of the SSE for $\psi$ can be computationally expensive and unstable due to the non-Euclidean state space of the qubit and the possible non-linearity introduced by a general noise process, we derive full distribution solutions, which in turn lead to explicit formulas for the expectation and variance instead of approximations and/or bounds. 

\subsection{System of ODEs}
\label{sec:odes}
Similar to \cite{boundspaper}, the quantity of interest in quantum computing is the fidelity $\F=|\phi^\dagger \psi|^2$, where $\phi$ is the desired state without noise, i.e.\ $\phi$ satisfies \eqref{eq:se}, and $\psi$ is the state evolving under noise, i.e.\ $\psi$ satisfies \eqref{eq:sse}.

\smallskip

To derive an explicit formula for $\F$, we recursively derive a system of real-valued stochastic differential equations for a vector $\V\in \mathbb{R}^m$, $m\ge 1$, as
\begin{equation}
\label{eq:odesystem}   
\di\V=A\V\, \di t+B\V\, \di X_t+a\,\di t+b\,\di X_t,
\end{equation}
where the first component of $\V$ is the fidelity $\F=|\phi^\dagger\psi|^2$. The size of the system varies depending on the properties of the noise operator $S$. In the case of Pauli and projection noise, $m=3$, while for the non-commuting case, $m=10$.

\smallskip

For white noise (WN), we can solve these equations by considering $\Y=\mathbb{E}[\V]$ and solving for the deterministic equation $\di\Y=A\Y\,\di t+a\,\di t$, which is equivalent to the Lindbladian approach (see App.~\ref{app:ssederivation}). However, for a general noise (SM), this can no longer be done because $\mathbb{E}[\V\, \di X_t]\neq\mathbb{E}[\V]\mathbb{E}[\di X_t]$. Specifically, for (OU), we see terms of the form $\mathbb{E}[X^2\V]$ appearing. The physically most obvious way of incurring a solution, as performed in \cite{orderofapprox}, is to extend the vector $\V$ with new entries of the form $X^{2n}\V$, and derive an elaborate system which will then again include $\mathbb{E}[X^{2n+1}\V]$. Eventually, an approximation has to be made for these terms to keep the system finite. In App.~\ref{app:itoiso}, we detail on the approximation $\mathbb{E}[X^{2n}\V]=\mathbb{E}[X^{2n}]\mathbb{E}[\V]$. The $n$ at which the truncation is done is referred to as the \textit{order} of the approximation, for which an analytical solution for $\mathbb{E}[\V]$ can be found. As shown in Sec.~\ref{sec:resultsa}, solutions become better with increasing order of approximation, but eventually become nonphysical as fidelities will leave the interval $[0,1]$.

\smallskip

Instead, we argue for the less intuitive approach of solving for the full distribution of $\V$, from which the expectation and higher-order moments can be inferred. For all cases considered in this work---except for the non-commuting case---we obtain a diagonalizable ODE system \eqref{eq:odesystem}, allowing us to obtain
\begin{equation}
	\di\Z = \left(-\frac{1}{2}\gamma^2\Lambda^\dagger\Lambda\, \di t+ \Lambda\, \di X_t\right)\bigl(\Z-c\bigr),
\end{equation}
where $\{\Lambda, P\}$ is the eigensystem diagnonalizing both $A$ and $B$, $\Z=P^{-1}\V$, and $c$ is a vector of constants depending on $a$ and $b$. The solution of this stochastic differential equation can, by Ito's formula \cite{ito}, be seen to be
\begin{equation}
    \Z=\exp(\Lambda (X_t-X_0))(\Z_0-c)+c,
\end{equation}
consequently giving,
\[
	\V = P(\exp(\Lambda (X_t-X_0))(z_0-c)+c),
\]
from which we obtain an expression for the fidelity $\F$.

\subsection{Pauli noise}

We consider the case for Pauli noise. Using stochastic calculus, we can derive the ODE system \eqref{eq:odesystem} for the vector
\[
	\V=\begin{bmatrix}
	\F \\ |\phi^\dagger S\psi|^2 \\ i(\phi^\dagger S \psi\psi^\dagger \phi-\phi^\dagger S \psi\psi^\dagger \phi)
\end{bmatrix}\in\mathbb{R}^3,
\]
obtaining $a=b=0$,
\begin{equation}
    A=\gamma^2\begin{bmatrix}
-1 & 1 & 0  \\
1 & -1 & 0 \\
0 & 0 & -2
\end{bmatrix},\;\; B=\begin{bmatrix}
0 & 0 & -1  \\
0 & 0 & 1 \\
2 & -2 & 0
\end{bmatrix}.
\end{equation}
Notice that $A =\gamma^2 B^2/2$. After applying the diagonalization method, we deduce the following expression for the fidelity
\begin{equation}
\label{eq:paulidist}
\begin{aligned}
    \F &=\frac{1}{2}(1+\S_0^2)+\frac{1}{2}(1-\S_0^2)\cos(2(X-X_0)) \\
    &=\cos^2(X-X_0) + \S_0^2\sin^2(X-X_0),
\end{aligned}
\end{equation}
where $\S_0=\phi_0^\dagger S \phi_0$. For the (WN) and (OU) noise process, we deduce explicit formulas for the expectation and variance for the fidelity (cf.\ App.~\ref{app:cosexp}).

\smallskip

\paragraph*{Ornstein-Uhlenbeck (OU) process.}

For calibrated initial data (i.e.\ $X_0=0$), we find that 
\begin{equation}
    \begin{aligned}
    \label{eq:pauliexp}
        \mathbb{E}[\F_t^{\text{OU}}]&=\frac{1}{2} \left(1+\S_0^2\right)+\frac{1}{2} \left(1-\S_0^2\right) e^{-2\gamma^2\tau_k(t)},\\
        \text{Var}(\F_t^{\text{OU}})&=\frac{1}{8} \left(1-\S_0^2\right)^2 \left[1-e^{-4\gamma^2\tau_k(t)}\right]^2,
    \end{aligned}
\end{equation}
with $\tau_k(t) := e^{-k t} \sinh (k t)/{k}$,
while for stationary initial data (i.e.\ $X_0\sim \gamma\mathcal{N}/\sqrt{2k}$), we obtain
\begin{equation}
    \begin{aligned}
        \mathbb{E}[\F_t^{\text{OU}}]&=\frac{1}{2} \left(1+\S_0^2\right)+\frac{1}{2} \left(1-\S_0^2\right) e^{-2
   (1-e^{-kt})\frac{\gamma^2}{k}},\\
        \text{Var}(\F_t^{\text{OU}})&=\frac{1}{8} \left(1-\S_0^2\right)^2 \left[1-e^{-4(1-e^{-kt})\frac{ \gamma^2}{k}}\right]^2.
    \end{aligned}
\end{equation}

\paragraph*{White noise (WN) process.}
Passing $k\downarrow 0$ in \eqref{eq:pauliexp} and using the fact that $\lim_{k\downarrow 0}\tau_k(t)=t$, we recover the expectation and variance for the (WN) case:
\begin{equation}
    \begin{aligned}
        \mathbb{E}[\F_t^{\text{WN}}]&=\frac{1}{2} \left(1+\S_0^2\right)+\frac{1}{2} \left(1-\S_0^2\right) e^{-2\gamma^2 t},\\
        \text{Var}(\F_t^{\text{WN}})&=\frac{1}{8} \left(1-\S_0^2\right)^2 \left[1-e^{-4\gamma^2t}\right]^2.
    \end{aligned}
\end{equation}

Notice the difference between the large-time asymptotic fidelity values for the (WN) and (OU) processes: 
\begin{align*}
	\lim_{t\to\infty}\mathbb{E}[\F_t^{\text{OU}}] &= \frac{1}{2} \left(1+\S_0^2\right)+\frac{1}{2} \left(1-\S_0^2\right) e^{-\frac{\gamma^2}{k}} \\
	&\ge \frac{1}{2} \left(1+\S_0^2\right) = \lim_{t\to\infty}\mathbb{E}[\F_t^{\text{WN}}],
\end{align*}
and 
\[
	\lim_{t\to\infty}\text{Var}[\F_t^{\text{OU}}] \le \lim_{t\to\infty}\text{Var}[\F_t^{\text{WN}}],
\]
indicating a better limit fidelity for the (OU) process, which is logical given that the (OU) noise is inherently damped.

\subsection{Projection noise}
Next, we consider projection noise. For the vector
\[
	\V=\begin{bmatrix}
		\F \\ \phi^\dagger S\psi\psi^\dagger\phi+\phi^\dagger\psi\psi^\dagger S\phi \\ i(\phi^\dagger S \psi\psi^\dagger \phi-\phi^\dagger S \psi\psi^\dagger \phi)
	\end{bmatrix}\in\mathbb{R}^3,
\]
we find the ODE system \eqref{eq:odesystem} with
\begin{equation}
\begin{aligned}
    A=-\frac{\gamma^2}{2}\begin{bmatrix}
0 & 1 & 0  \\
0 & 1 & 0 \\
0 & 0 & 1
\end{bmatrix}&,\;\; B=\begin{bmatrix}
0 & 0 & 1  \\
0 & 0 & 1 \\
0 & 1 & 0
\end{bmatrix},\\
a=\gamma^2\S_0^2\begin{bmatrix}
1 \\
1 \\
0
\end{bmatrix}&,\;\;
b=\S_0^2\begin{bmatrix}
0  \\
0 \\
-2
\end{bmatrix}.
\end{aligned}
\end{equation}

The diagonalization method then gives the fidelity
\begin{equation}
    \F = 1 - 2 (1 - \S_0^2) \S_0^2 (1 - \cos(X_t - X_0)).
\end{equation}
As in the previous case, we can deduce explicit formulas for the expectation and variance for the fidelity under (WN) and (OU) noise processes (cf.\ App.~\ref{app:cosexp}).

\smallskip
\paragraph*{Ornstein-Uhlenbeck (OU) process.}
For calibrated initial data (i.e.\ $X_0=0$), we find
\begin{equation}
\begin{aligned}\label{eq:projexp-calibrated}
    \mathbb{E}[\F_t^{\text{OU}}] &= 1-2 (1-\S_0^2) \S_0^2 \left[1-e^{-\frac{\gamma^2}{2}\tau_k(t)}\right],\\
   \text{Var}(\F_t^{\text{OU}}) &= 2 (1-\S_0^2)^2 \S_0^2 \left[1-e^{-\gamma^2\tau_k(t)}\right]^2,
\end{aligned}
\end{equation}
while for stationary initial data (i.e.\ $X_0\sim \gamma\mathcal{N}/\sqrt{2k}$), we find
\begin{equation}
\label{eq:projexp}
\begin{aligned}
    \mathbb{E}[\F_t^{\text{OU}}] &= 1-2 (1-\S_0^2) \S_0^2 \left[1-e^{-(1-e^{-k t})\frac{\gamma^2}{2 k}}\right],\\
   \text{Var}(\F_t^{\text{OU}}) &= 2 (1-\S_0^2)^2 \S_0^2 \left[1-e^{-(1-e^{-k t})\frac{\gamma^2}{k}}\right]^2.
\end{aligned}
\end{equation}
\paragraph*{White noise (WN) process.}
Passing $k\downarrow 0$ in \eqref{eq:projexp-calibrated}, one recovers the expectation and variance for the (WN) case:
\begin{equation}
    \begin{aligned}
        \mathbb{E}[\F_t^{\text{WN}}]&=1-2 (1-\S_0^2) \S_0^2 \left[1-e^{-\frac{\gamma^2}{2}t}\right],\\
        \text{Var}(\F_t^{\text{WN}})&=2 (1-\S_0^2)^2 \S_0^2 \left[1-e^{-\gamma^2t}\right]^2.
    \end{aligned}
\end{equation}
Similar to Pauli noise, the large time-asymptotics show that the limit expected fidelity value under (OU) noise is higher than that of the (WN) noise.

\subsection{Non-commuting}
\label{sec:noncommute}
For the non-commuting case, we could still find a closed ODE system of size $m=10$ due to the group structure of the Pauli matrices. We find a system of 10 equations where the matrix $A$ splits as $A=A_c + B^2$, such that
\begin{equation}
    \di\V = \alpha A_c \V\, \di t + \frac{\gamma^2}{2} B^2 \V\,\di t + B \V\, \di X_t,
\end{equation}
with $A_c$ the commutator terms and $B^2$ the noise terms (see App.~\ref{app:noncommute} for full matrices). In the absence of $A_c$, the system can be solved using the diagonalization method outlined in Sec.~\ref{sec:odes}. However, because $A_c$ and $B$ do not commute, this is not possible in the full system. Instead, assuming that the driving Hamiltonian dominates the noise intensity, i.e.\ $\gamma^2/\alpha \ll 1$, we can use the stochastic Magnus expansion \cite{stochasticmagnus} (see App.~\ref{app:noncommute}) to obtain an approximation fo the fidelity.

\smallskip
\paragraph*{White noise (WN) process.} For the white noise case,
\begin{equation}
\label{eq:noncommuteexp}
    \begin{aligned}
    \mathbb{E}[\F]\approx&=\frac{1}{2}+\frac{1}{2} \C_1^2 e^{-2 \gamma^2 t}\\
    &+\frac{1}{2} e^{-\gamma^2 t} \Bigl(\sinh (u) ((\C_2^2-\C_3^2) \cos (2a t)\\
    &-2 \C_2 \C_3
   \sin (2a t))+\left(\C_2^2+\C_3^2\right) \cosh (u)\Bigr)
\end{aligned}
\end{equation}
where $\C_i=\phi_0^\dagger\sigma_i\phi_0$ with $\C_1^2+\C_2^2+\C_3^2=1$. From this expression, one can observe the oscillatory behavior introduced by the Hamiltonian. This can be explained by considering $H=X, S=Y$ and $\psi_0=|0\rangle$. Initially, the state is susceptible to the $S=Y$ noise and will start rotating. As the evolution under $H$ progresses, the state moves to $(|0\rangle+|1\rangle)/\sqrt{2}$, which is not susceptible to this noise and even later moves to $|1\rangle$, which is again susceptible. This process repeats, causing the oscillatory behavior with frequency dependent on the strength of the Hamiltonian $\alpha$ (see Fig.~\ref{fig:noisyblochsphere}). 

\smallskip
\paragraph*{Ornstein-Uhlenbeck (OU) process.} Unfortunately, we were unable to obtain a satisfying approximation for the expectation of the fidelity with (OU) noise process. Nevertheless, we provide in App.\ref{app:noncommute} a very rudimentary approximation that may obtain nonphysical values for certain parameters.

\subsection{Multi-qubit}
In a work by Kobayashi and Yamamoto~\cite{boundspaper}, noise acting on multiple qubits is analyzed for multi-qubit operators satisfying $S^2=I$. These are important cases, as noise on the $S=Z\otimes Z$ operator could model noise during an entanglement procedure, while $S=X\otimes I$ or $S=I\otimes X$ represents noise acting on one of the qubits in the system. However, it is specifically mentioned in \cite{boundspaper} that operators of the form $I\otimes Q+Q\otimes I$, with $Q^2=I$, cannot be treated using their methods. These types of operators are also important cases, as they model noise on a global control addressing all qubits simultaneously, which is prevalent in many quantum computing systems \cite{babalar}.

\smallskip

With our methods, we can solve these systems by defining $R := Q\otimes Q$, such that
\begin{align}
S^2 =I\otimes Q^2+Q^2\otimes I+2R.
\end{align}
For Pauli noise ($Q^2=I$) we find
\begin{equation}
\begin{aligned}
\label{eq:multiqubit}
    \F &= \frac{1}{4} \left(\S_0^2+(\R_0-1)^2+2\left(1-\R_0^2\right) \cos (2 (X_t-X_0))\right.\\
    &+\left.(1-\S_0+\R_0) (1+\S_0+\R_0) \cos ^2(2 (X_t-X_0))\right),
\end{aligned}
\end{equation}
with $\S_0=\phi_0^\dagger S\phi_0$ and $\R_0=\phi_0^\dagger R\phi_0$. 

\smallskip
For projection noise ($Q^2=Q$), we find 
\begin{equation}
    \begin{aligned}
        \F &=1-2 \big(\cos (X_t-X_0)-1\big) \times\\
        &\quad\big(\S_0^2+\S_0 (2 \R_0 (\cos (X_t-X_0)-1)-1)\\
        &-2 \R_0 [\R_0
   (\cos (X_t-X_0)-1)+\cos (X_t-X_0)]\big).
    \end{aligned}
\end{equation}
If we assume that the two-qubit system starts in a product state $\phi_{0}=\phi_{0}^{(0)}\otimes\phi_{0}^{(1)}$, then the values $\S_0$ and $\R_0$ can be expressed as $\S_0=\S_{0}^{(0)}+\S_{0}^{(1)}$ and $\R_0=\S_0^{(0)}\S_0^{(1)}$, where the superscript $(i)$ refers to the $i$-th qubit.  For both Pauli and projection noise, the fidelity distribution then factors as
\begin{equation}
\label{eq:factor}
    \F=\F^{(0)}\F^{(1)}.
\end{equation}
In App.~\ref{app:multiq}, this is generalized to $n$-qubit systems. Physically, this result is intuitive, as the two qubits evolve independently but under the same noise. This means that their fidelities are only dependent on the noise process and thus factor on the level of the distribution. Note that this does not mean that the expectation of the fidelity will factor as $\F^{(0)}$ and $\F^{(1)}$ are not independent. The full expectation $\mathbb{E}[\F]$ can still be calculated according to the methods of App.~\ref{app:cosexp}. Moreover, the two-qubit method can also treat entangled ground states for which the distributions do not factor as described by Eq.~\eqref{eq:factor}.

\section{Simulation results}
\label{sec:results}
In this section, we verify our fidelity calculations by comparing them with Monte Carlo simulations. For all results in this work, we employ the explicit second-order scheme due to Platen \cite{stochasticintegration2,platen} as our Monte-Carlo stochastic integration method, as detailed in App.~\ref{app:stochint}. We also take the driving Hamiltonian $H=0$, unless explicitly stated otherwise.

\subsection{Pauli noise + Order of Approximation}
\label{sec:resultsa}
We start by showing that the order of approximation method, as detailed in Sec.~\ref{sec:odes}, becomes better with increasing order, but eventually leads to nonphysical behavior. For the Pauli noise system, we get an evolution for the vector $\x=\mathbb{E}[\V]$ given by
\begin{equation}
    \dot \x= \begin{bmatrix}
-\gamma^2 & \gamma^2 & ik  \\
\gamma^2 & -\gamma^2 & -ik \\
2ip & -2ip & -k-2\gamma^2
\end{bmatrix} \x,\quad \x(0)=\begin{bmatrix}
1  \\
\S_0^2  \\
0 
\end{bmatrix}
\end{equation}
with $p=(k\mathbb{E}[X_t^2]-\gamma^2)$ and $\S_0=|\phi_0^\dagger S \phi_0|$. Here, we made the assumption $\mathbb{E}[X_t^2\V]=\mathbb{E}[X_t^2]\mathbb{E}[\V]$ to close the system. The second-order system is detailed in App.~\ref{app:secondorder}. Using the fact that $\mathbb{E}[X_t^2]=(1-e^{-2 kt})\gamma^2/2k$ (see App.~\ref{app:cosexp}), we can solve these systems numerically using the \texttt{odeint} methods from SciPy \cite{2020SciPy-NMeth}.

\smallskip

From Fig.~\ref{fig:orderofapprox}, we see that for high fidelities the order of approximation method works well, and that indeed the second-order solution follows the exact solution longer than the first-order does. This shows that the Ito isometry \cite{itoisometry} approximation of App.~\ref{app:itoiso} works satisfactorily for short durations. However, both orders eventually lead to nonphysical behavior, as fidelities are predicted outside the interval $[0,1]$, and thus do not correctly predict the long-time behavior. This will become especially important as one starts looking at optimal control methods for the SSE, where the stability of the fidelity needs to be maintained for long periods. This showcases the importance of our full distribution method, which correctly predicts the long-time behavior of the fidelity.

\begin{figure}[t]
    \centering
    \includegraphics[scale=0.55]{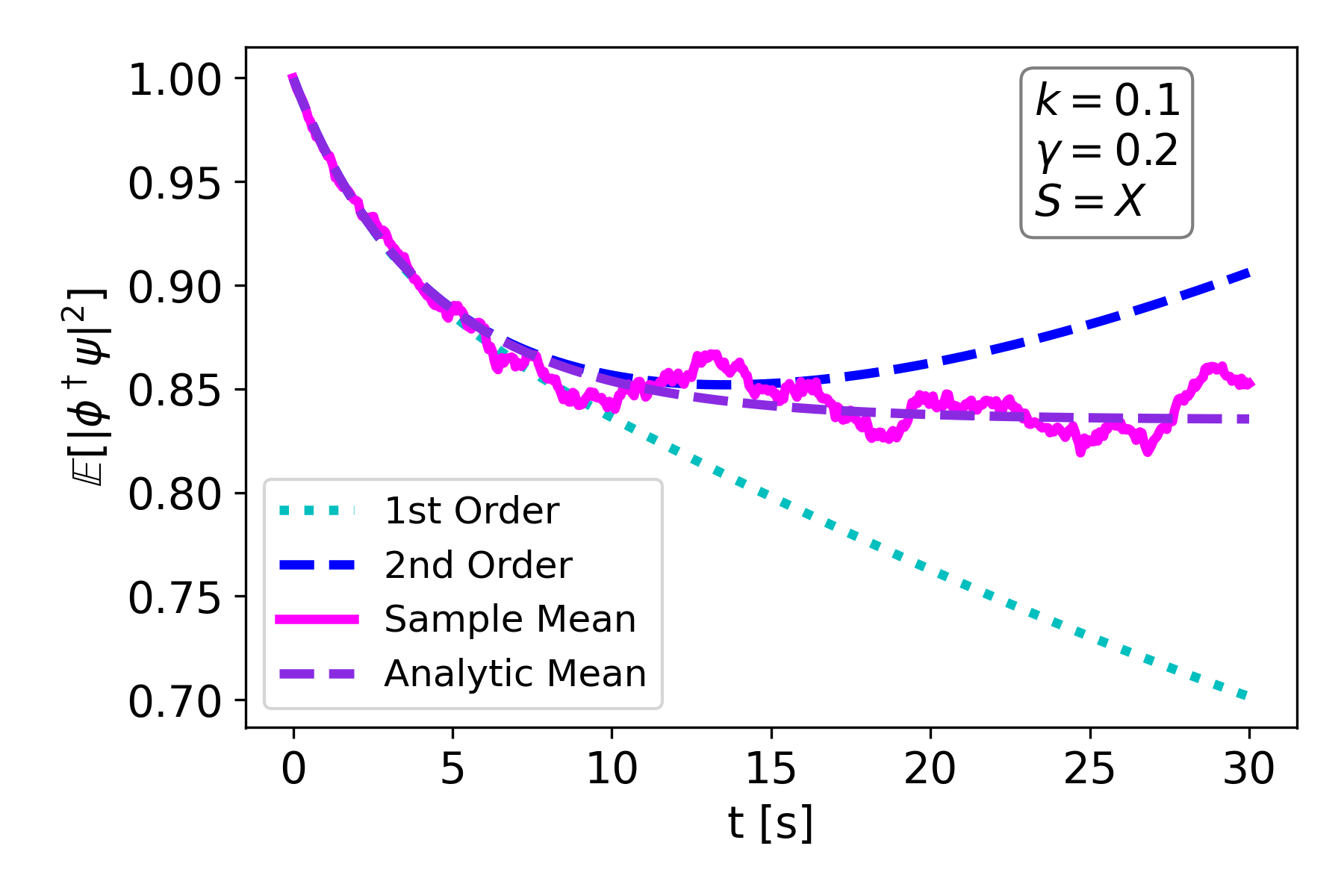}
    \caption{\emph{Pauli noise:} Fidelity expectations for a 1 qubit system with $S=X, \gamma=0.2$, $k=0.1$ and calibrated initial data. First and second order of approximation together with sample mean and analytic mean as in Eq.~\eqref{eq:pauliexp}. Computed with $2\cdot10^2$ Monte-Carlo simulations.}
    \label{fig:orderofapprox}
\end{figure}

\begin{figure}[b]
    \centering
    \includegraphics[scale=0.4]{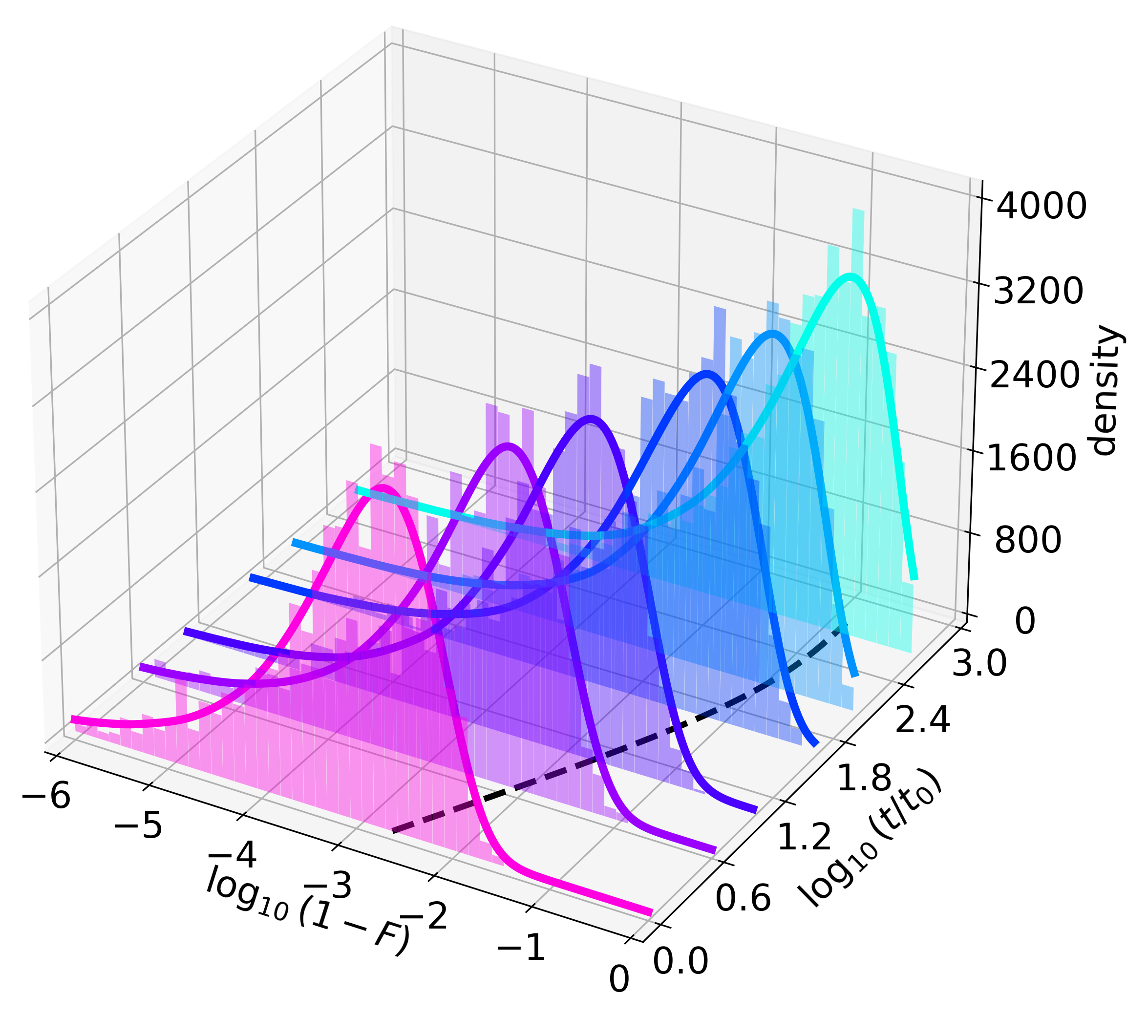}
    \caption{\emph{Pauli noise:} Fidelity distributions at various times $t$ for a 1 qubit system with $S=X, \gamma=0.2$, $k=0.1$ and calibrated initial data as in Eq.~\eqref{eq:paulidist}. Solid lines are kernel density estimates of distributions as in Eq.~\eqref{eq:paulidist} and histograms are from Monte Carlo simulations of Eq.~\eqref{eq:sse-ou}. Here $t_0=0.06$ is an arbitrary timescale. Expectation of fidelity as dashed black line. $2\cdot10^3$ Monte-Carlo simulations for each curve and histogram.}
    \label{fig:distributions}
\end{figure}

\smallskip

In Fig.~\ref{fig:distributions}, we see the full distributions of the fidelity as a function of time. Shown here, are histograms for the fidelities which are the result of $2\cdot10^3$ Monte-Carlo SSEs (taking about 30 mins on an 8-core laptop). In contrast, we also determine the kernel density estimate \cite{kde} of $1\cdot 10^3$ (OU) noise realizations (taking mere seconds on the same setup), resulting in the solid distribution lines. This shows that our distribution method gives the correct result, while also being significantly less computationally demanding. Furthermore, from Fig.~\ref{fig:distributions} we see that the distributions maintain their shape, while their peaks shift very quickly towards zero fidelity.

\subsection{Projection noise}

For the projection noise, we take $S=|1\rangle\langle 1|$ and analyze the fidelity for both the (OU) and (WN) noise. As Fig.~\ref{fig:projnoise} shows, (OU) noise and (WN) noise initially have similar fidelity but quickly diverge, leading to differing asymptotic values. This is to be expected, as the second moment of (WN) grows as $\mathbb{E}[X_t^2]=\gamma^2\sqrt{t}$, whereas for (OU) noise $\lim_{t\rightarrow\infty}\mathbb{E}[X_t^2]=\gamma^2/2k$. Thus, the (WN) process drifts off, whereas the (OU) noise keeps fluctuating around the zero value, constantly correcting for itself. We also see this in Fig.~\ref{fig:projnoise}, where the variance of the fidelity evolving under (WN) noise grows much faster than that of evolving under(OU) noise.

\begin{figure}[t]
    \centering
    \includegraphics[scale=0.55]{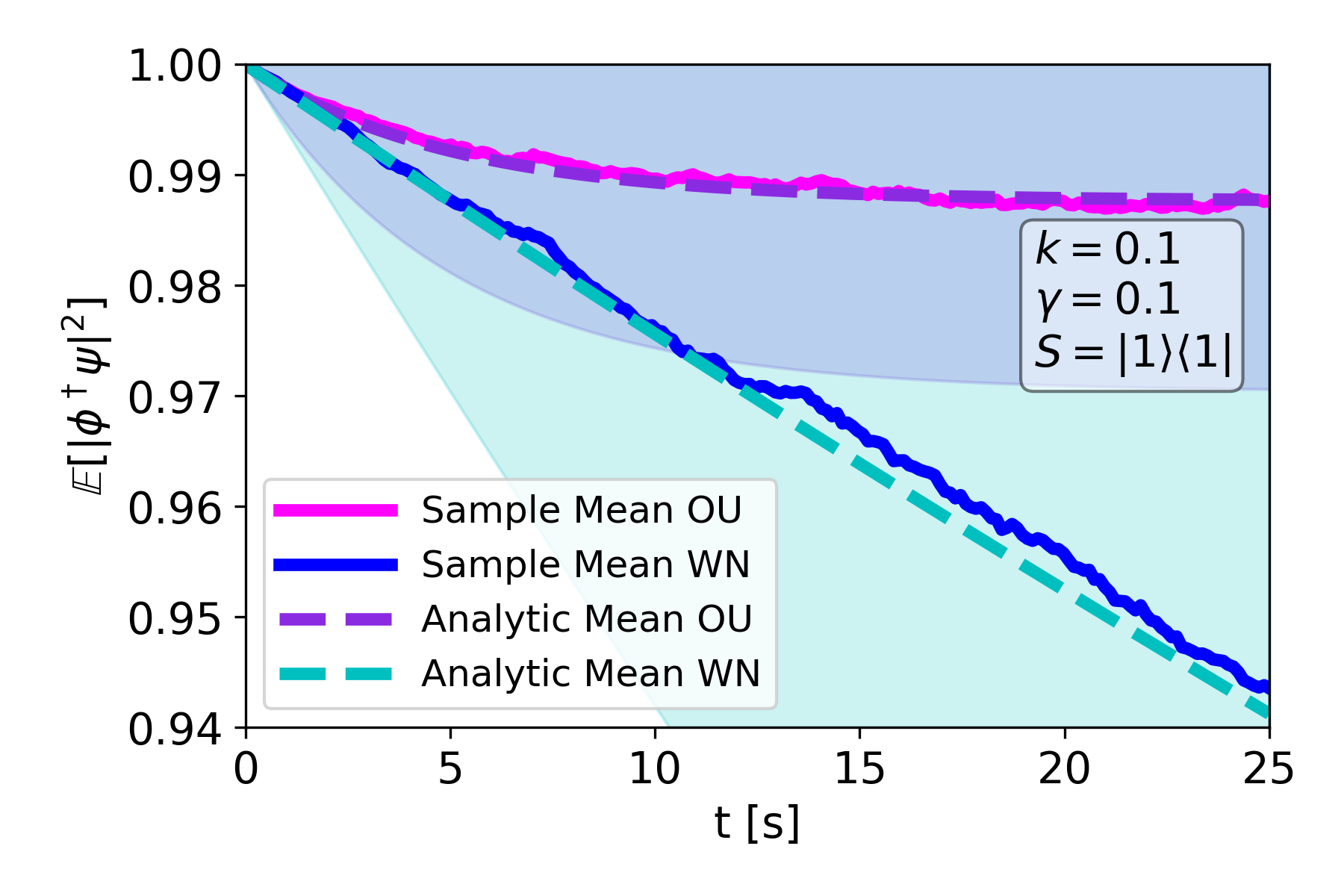}
    \caption{\emph{Projection noise: }Fidelity expectations and variances for a 1 qubit system with $S=|1\rangle\langle1|$, $\gamma=0.1$, $k=0.1$ (and $k=0$ for (WN)) and stationary initial data, with sample mean and analytic mean as in Eq.~\eqref{eq:projexp} and analytic variances. Computed with $5\cdot10^2$ Monte-Carlo simulations.}
    \label{fig:projnoise}
\end{figure}

\subsection{Non-commuting}
We analyze the non-commuting case as described in Sec.~\ref{sec:noncommute}. From Fig.~\ref{fig:noncommute}, we see that the sample mean fidelities nicely overlap with the predicted result from Eq.~\eqref{eq:noncommuteexp}. The fidelities indeed show the oscillating behavior caused by the non-commuting Hamiltonian, which consistently cycles through the susceptibility of the state to the noise. The long-time behavior matches the Monte-Carlo simulations because of the stochastic Magnus expansion type of approximation.

\smallskip
Furthermore, we see that for $\C_1=1$ the initial state is an eigenstate of the Hamiltonian, meaning that it will not cycle in and out of susceptibility to the noise and thus decrease in fidelity exponentially with a rate $-2\gamma^2$. On the other hand, for $\C_2=1$ or $\C_3=1$ the state, under the influence of the Hamiltonian evolution, cycles in and out of susceptibility to the noise and only decreases exponentially with a rate $-\gamma^2$, i.e.\ less than the previous scenario. Also note from the inset of Fig.~\ref{fig:noncommute}, the difference between $\C_2=1$ and $\C_3=1$, where for $\C_2=1$ the initial state is an eigenstate of the noise operator, meaning that it is not susceptible to noise and thus the fidelity does not immediately decrease. For $\C_3=1$,  the initial state is the most susceptible to noise and the fidelity immediately drops, following the $\C_1=1$ curve until the Hamiltonian rotation starts to take effect. 

\begin{figure}[t]
    \centering
    \includegraphics[scale=0.5]{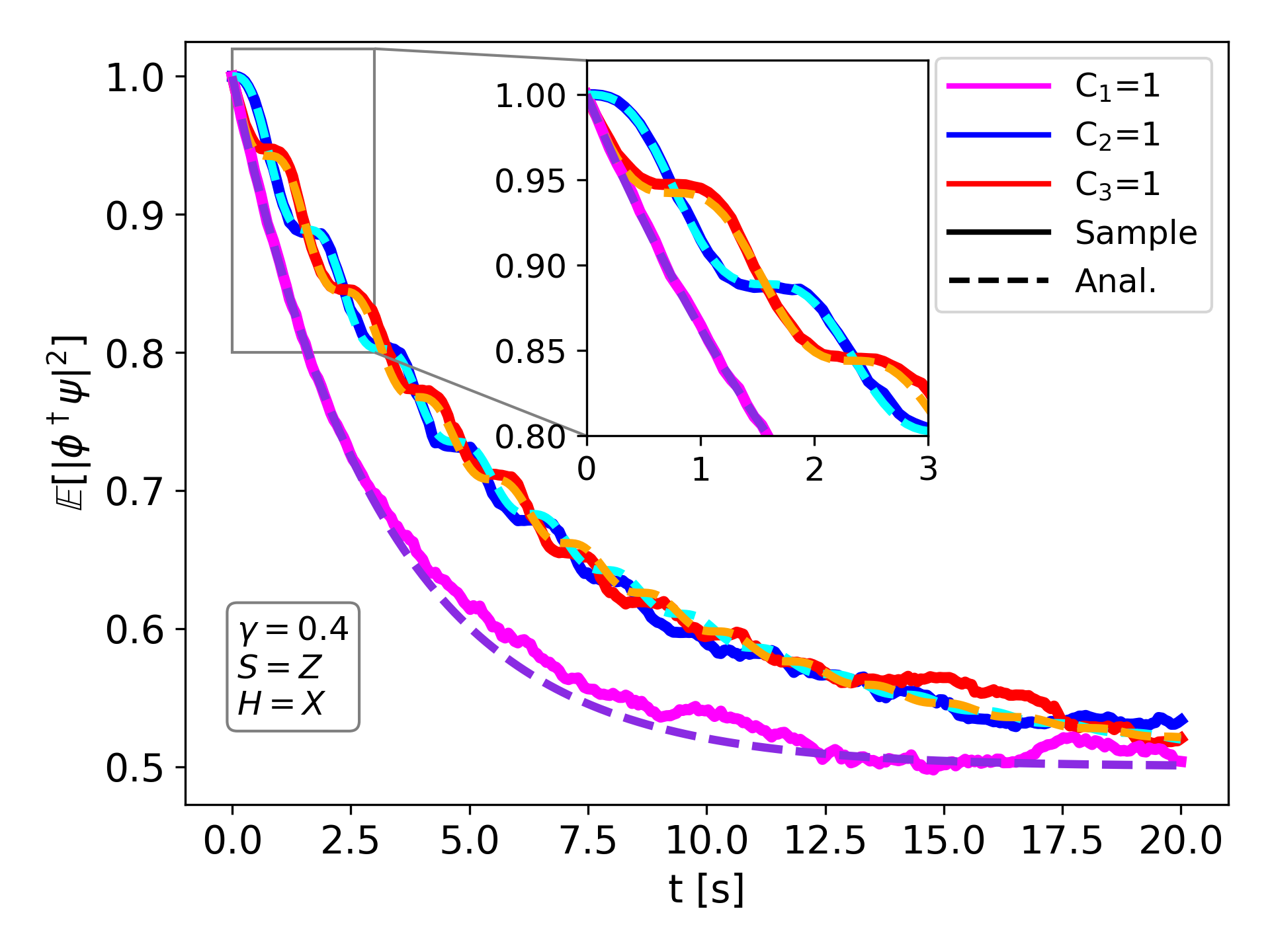}
    \caption{\emph{Non-commuting:} Fidelity expectations for a 1 qubit system with $S=Z$, $H=X$, $\gamma=0.4$ and calibrated initial data. Sample mean and analytic mean as in Eq.~\eqref{eq:noncommuteexp} with varying conditions for $\C_1^2+\C_2^2+\C_3^2=1$. Computed with $1\cdot10^3$ Monte-Carlo simulations.}
    \label{fig:noncommute}
\end{figure}

\subsection{Multi-qubit}
Lastly, we analyze a two-qubit system with the noise operator $S=X\otimes I+I\otimes X$. We analyze both the product initial state $\phi_0=|00\rangle$ (which could have also been treated as the product of the distributions of the individual qubits, c.f. App.~\ref{app:multiq}), as well as the maximally entangled GHZ \cite{ghz} initial state $\phi_0=(|00\rangle+|11\rangle)/\sqrt{2}$. From Fig.~\ref{fig:twoqubit}, we see that our distribution method matches the Monte-Carlo simulation results.

\smallskip

Furthermore, we see that the product initial state indeed yields different results from the entangled GHZ state. Even more interesting is the fact that for large values of $k\gg 1$, the $|00\rangle$ initial state has a better asymptotic fidelity, whereas, for small values of $k\ll 1$, the GHZ initial state is better. Intuitively, this can be explained by considering a unitary operator $U(X_T)$ mapping $|00\rangle\rightarrow |\eta_T\rangle$ and $|11\rangle\rightarrow |\zeta_T\rangle.$ The overlap of the final states with $\phi_T$, which coincides with $\phi_0$ since $H=0$, are given for the case $\phi_0=|00\rangle$ by
\[
    \langle \psi_T|\phi_T\rangle = \langle 00|\eta_T\rangle,
\]
and for the case $\phi_0=\text{GHZ}$ by
\[
    \langle \psi_T|\phi_T\rangle = \frac{1}{2}\bigl(\langle 00|\psi_T\rangle+\langle 11|\zeta_T\rangle+\langle 00|\zeta_T\rangle+\langle 11|\psi_T\rangle\bigr).
\]

For both small and large noises we can, by symmetry, consider $\langle 00|\psi_T\rangle=\langle 11|\zeta_T\rangle$. If $k\gg 1$ is large, the noise will remain small resulting in $|\psi_T\rangle\approx|00\rangle$ and $|\zeta_T\rangle\approx|11\rangle$, and the cross terms barely contribute to the fidelity. Conversely, for small $k\ll 1$, the noise is larger, and the cross terms become large, resulting in higher fidelities. Note that this does not always have to be the case. Indeed, if $S=Y\otimes I+I\otimes Y$, then GHZ is an eigenstate with eigenvalue 0 and does not evolve under this noise at all. 

\begin{figure}[b]
    \centering
    \includegraphics[scale=0.27]{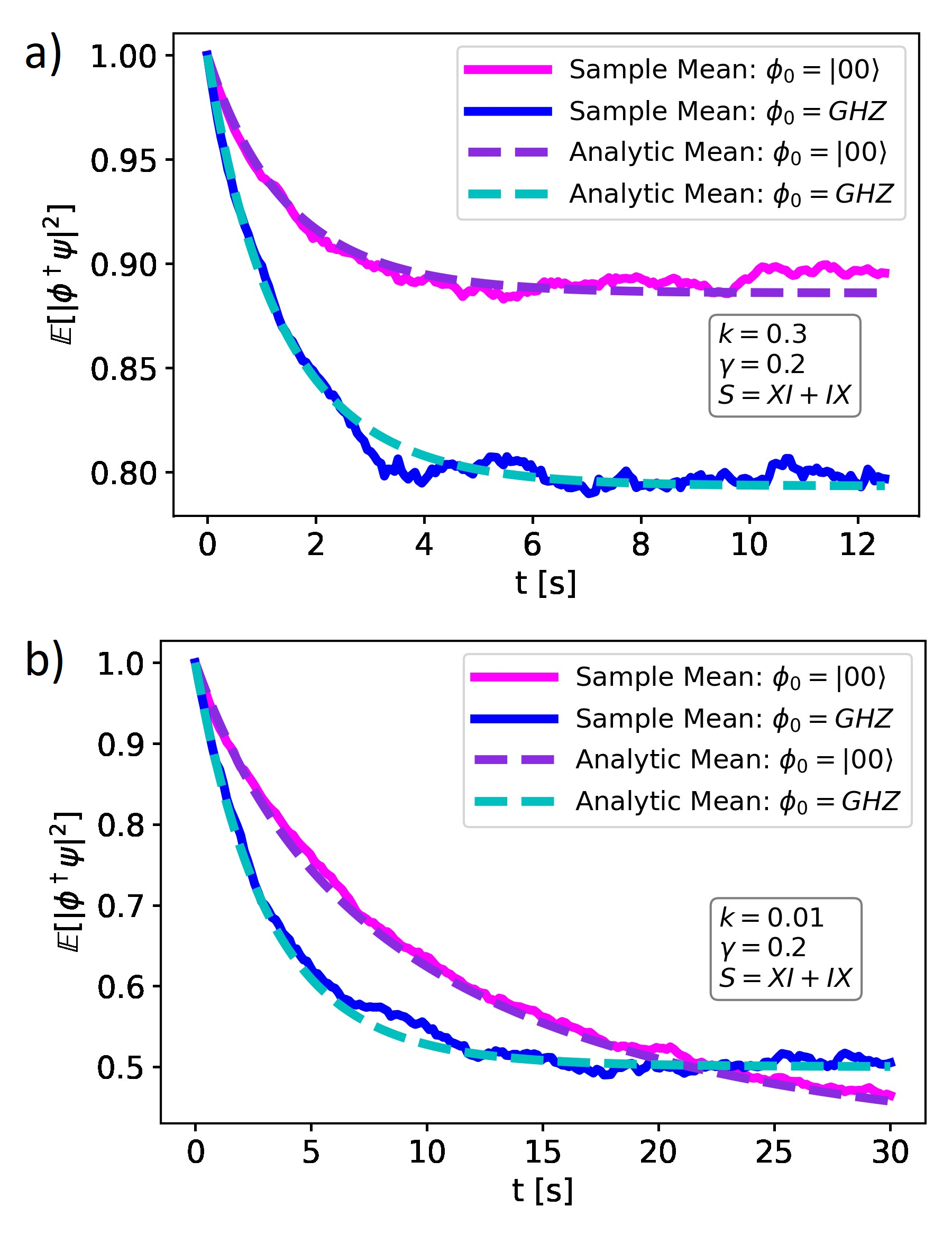}
    \caption{\emph{Two-qubit:} Fidelity expectations for a 2 qubit system with $S=X\otimes I+I\otimes X$, $\gamma=0.2$ and stationary initial data. \emph{(a)} $k=0.3$ and \emph{(b)} $k=0.01$. The sample and analytic means for $|00\rangle$ and GHZ initial state, as calculated from Eq.~\eqref{eq:multiqubit}. Computed with 500 Monte-Carlo simulations.}
    \label{fig:twoqubit}
\end{figure}

\section{Conclusion}
\label{sec:conclusion}
This work describes a novel method for expressing the full distribution of qubit fidelity under SSEs. These solutions contain more information than the standard Lindblad equation, which can in turn be used for noise characterization and optimal control problems. Furthermore, our work is the first to predict the long-term behavior of fidelities evolving under realistic noise types, such as (OU) noise stemming from fluctuation-dissipation processes. This long-time behavior is especially important in control processes, where one seeks to maintain a state faithfully for a long period. The methods developed here are significantly less computationally demanding and more accurate than standard Monte Carlo methods for SSEs.  All these predictions can further be used to decide on the allowed levels of noise in a qubit control system, e.g.\ a driving laser or a cavity resonator, thus, resulting in an accurate benchmark for control system noise on a qubit system.

\smallskip

In future work, we hope to extend this work to predictions on noises with arbitrary power spectral densities. Furthermore, we hope to generalize our method to $N$ qubits and extend the results to non-commuting Hamiltonians. Lastly, we hypothesize that these methods on SSEs can be used to introduce optimal control for noise mitigation on open quantum systems, which so far has failed to be realized for Lindblad equations. 

\section*{Acknowledgements}
We thank Jasper Postema, Raul F. Santos, Madhav Mohan, and Jasper van de Kraats for fruitful discussions. This research is financially supported by the Dutch Ministry of Economic Affairs and Climate Policy (EZK), as part of the Quantum Delta NL program,  the project “EuRyQa - European infrastructure for Rydberg Quantum Computing” grant agreement number 10107014, and by the Netherlands Organisation for Scientific Research (NWO) under Grant No.\ 680.92.18.05. L.Y.\ Visser and O.\ Tse acknowledge support from NWO grant NGF.1582.22.009.

\section*{Data Availability}
The data that support the findings of this study are available from the corresponding author upon reasonable request.

\newpage

\bibliographystyle{apsrev4-1}
\bibliography{Bibliography.bib}

\onecolumngrid

\appendix
\section{Formal derivation of the Stochastic Schr\"{o}dinger equation}
\label{app:ssederivation}
In this section, we follow the derivation of the SSE as in \cite{colorednoisepaper}. Consider an (OU) process $X(t)$ and operators $C,D$, and $R$ on the Hilbert space. The basic linear Schr\"{o}dinger equation with noise is given by
\begin{equation}
    \di \psi=(C+DX_t) \psi\, \di t+R \psi\, \di X_t.
\end{equation}
Using $\di X_t=-kX_t+\gamma \di W_t$, this can be rewritten as
\begin{equation}
    \di \psi=(C+X D-k X R) \psi\, \di t+ \gamma R \psi\, \di W_t.
\end{equation}
For normalization, $\di |\psi|^2=(\di \psi^\dagger)\psi+\psi^\dagger(\di\psi)+(\di\psi^\dagger)(\di\psi)=0$ is required at all times. Expanding this using Ito calculus ($\di t^2=0, \di W_t\di t=0, \di W_t^2=\di t)$ \cite{ito} gives
\begin{equation}
\di|\psi|^2=\psi^\dagger\left[C^{\dagger}+C+X_t\left(D^{\dagger}+D-k R-k R^{\dagger}\right)+\gamma^2 R^{\dagger} R\right]\psi \,\di t+\gamma \psi^\dagger (R^{\dagger}+R) \psi \,\di W_t=0.
\end{equation}
Thus, $C^\dagger+C+R^\dagger R=0$, $D^\dagger+D-kR-kR^\dagger=0$ and $R^\dagger+R=0$. One can choose $R=iS$ with $S=S^\dagger$ Hermitian. Furthermore, $C=-iH-\frac{1}{2}S^\dagger S=-iH-\frac{1}{2}S^2$, with $H=H^\dagger$ and $D=0$ to finally get the SSE for an (OU) process as
\begin{equation}
    \text{(OU)}:\quad \di \psi=-iH\psi \,\di t+i k X_t S\psi \,\di t-\frac{\gamma^2}{2} S^\dagger S \psi \,\di t - i\gamma S\psi \,\di W_t.\\
\end{equation}
Letting $k\rightarrow0$ results in the SSE for the (WN) process as 
\begin{equation}
    \text{(WN)}:\quad \di \psi=-iH\psi \,\di t-\frac{\gamma^2}{2} S^\dagger S \psi \,\di t - i\gamma S\psi \,\di W_t.\\
\end{equation}
Note that Ito calculus holds for any semimartingale \cite{ito}, and thus for general semimartingale noise we can write 
\begin{equation}
    \text{(SM)}:\quad \di \psi=-iH\psi \,\di t-\frac{1}{2}S^\dagger S \psi \,\di[X]_t - i S\psi \,\di X_t,\\
\end{equation}
where $[X]_t$ is the quadratic variation of the process \cite{quadvar}. By defining $\rho=\mathbb{E}[\psi\psi^\dagger]$ and noting that $\mathbb{E}[\di W_t]=0$
\begin{equation}
\begin{aligned}
\di \psi\psi^\dagger&=-i[H,\psi\psi^\dagger]\,\di t+(S\psi\psi^\dagger S^\dagger-\frac{1}{2}\{S^\dagger S,\psi \psi^\dagger\})\,\di [X]_t-i[S,\psi\psi^\dagger]\,\di X_t\\
\Rightarrow \partial_t\rho &=-i[H,\rho]+\gamma^2(S\rho^\dagger S^\dagger-\frac{1}{2}\{S^\dagger S,\rho\})-i[S,\mathbb{E}[\psi\psi^\dagger \,\di X_t]],
\end{aligned}
\end{equation}
which for white noise reduces to the Lindblad equation since $\mathbb{E}[\psi\psi^\dagger \di X_t]=\mathbb{E}[\psi\psi^\dagger \di W_t]=\mathbb{E}[\psi\psi^\dagger]\mathbb{E}[ \di W_t]=0$. Note that for general noise, there is no independence of the state and the noise increments (i.e. $\mathbb{E}[\psi\psi^\dagger \di X_t]\neq\mathbb{E}[\di X_t]\rho$), and we do not get the standard Lindblad equation.

\section{Ito's Isometry}
\label{app:itoiso}
We detail the approximation method for terms of the form $\mathbb{E}[X_t^2 \V]$ as described in Sec.~\ref{sec:odes}. The integral solution of the (OU) process \cite{ounoise} can be written as
\begin{equation}
    X_t=\int_0^t \gamma e^{k(s-t)} \di W_s.
\end{equation}
By positivity of $\V$
\begin{equation}
    \mathbb{E}[X_t^2 \V]=\mathbb{E}\left[\left(\int_0^t \gamma e^{k(s-t)} \di W_s\right)^2\V\right]=\mathbb{E}\left[\left(\int_0^t \gamma \sqrt{\V} e^{k(s-t)} \di W_s\right)^2\right].
\end{equation}
If $\V$ were to be adapted to the natural filtration of the Wiener process $W_t$, Ito's isometry \cite{itoisometry} can be used to write
\begin{equation}
    \mathbb{E}[X_t^2 \V]=\mathbb{E}\left[\int_0^t \gamma^2 \V e^{2k(s-t)} \di t\right]=\frac{\gamma^2}{2k}(1-e^{-2kt})\mathbb{E}[\V]=\mathbb{E}[X_t^2]\mathbb{E}[\V].
\end{equation}
However, $\V$ could for instance be the fidelity at time $t$ which is dependent on the Wiener process in the interval $[0,t]$. Therefore, Ito's isometry does not hold. Nevertheless, for small values of $t$, the $\V$ values are roughly equal to their respective initial values, and Ito's isometry can be used as an approximation.  

\section{Moment calculation}
\label{app:cosexp}
For the expectation and variance of the fidelity distributions, expressions for the expectations of terms $\cos(\alpha(X_t-X_0))$ for $\alpha>0$ have to be calculated (note that for square terms $\cos(x)^2=(\cos(2x)+1)/2$). To do so, the cosines are expanded into their power series representation, and both linearity and conditional expectations are used to get
\begin{equation}
\label{eq:expcos}
    \mathbb{E}[\cos(\alpha(X_t-X_0))]=\sum_{n=0}^\infty \frac{(-1)^n}{2n!}\alpha^{2n}\mathbb{E}\big[\mathbb{E}[(X_t-X_0)^{2n}|X_0]\big].
\end{equation}

The (OU) process, by Ito's formula \cite{itoisometry}, can be shown to obey
\begin{equation}
\label{eq:difou}
    \di X_t^{n}=-nkX^{n}\di t+n\gamma X^{n-1}\di W_t+\frac{1}{2}(n^2-n)\gamma^2X^{n-2}\di t, \quad X^n(0)\sim X_0^n,\quad \forall n\in \mathbb{N}.
\end{equation}

\subsection*{Calibrated initial data $X_0=0$}

For calibrated initial data (e.g. $X_0=0$), it is easily proven by induction that
\begin{equation}
\label{eq:expcalibrated}
    \mathbb{E}\big[\mathbb{E}[(X_t-X_0)^{2n}|X_0]\big]=\mathbb{E}[X_t^{2n}]=\frac{\Gamma \left(n+\frac{1}{2}\right) }{\sqrt{\pi }}\left(\frac{2\gamma^{2}}{k}e^{-k  t} \sinh (k t)\right)^n.
\end{equation}
where $\Gamma$ is the gamma function. Resulting from Eq.~\eqref{eq:expcos} in
\begin{equation}
    \mathbb{E}[\cos(\alpha(X_t-X_0))]=\exp\left(-\alpha^2\frac{\gamma^2}{2k} e^{-k t} \sinh(kt) \right).
\end{equation}

\subsection*{Stationary initial data $X_0\sim \gamma\mathcal{N}/\sqrt{2k}$}

For stationary initial data, a slightly more involved approach is necessary, where the binomial theorem is employed to get
\begin{equation}
\label{eq:binomial}
    \mathbb{E}[(X_t-X_0)^{2n}|X_0]=\sum_{m=0}^{2n}\binom{2n}{m}(-1)^{2n-m}X_0^{2n-m}\mathbb{E}[X_t^m|X_0].
\end{equation}
From the differential equations in Eq.~\eqref{eq:difou}
\begin{equation}
    d\mathbb{E}[X_t^m|X_0]=-mk\mathbb{E}[X_t^m|X_0]\di t+\frac{1}{2}(m^2-m)\gamma^2\mathbb{E}[X_t^{m-2}|X_0]\di t,
\end{equation}
with initial conditions $\mathbb{E}[X_t^0|X_0]=1,\, \mathbb{E}[X_t^1|X_0]=X_0e^{-kt}.$ This can be shown to have the solution
\begin{equation}
\label{eq:expxtm}
\begin{aligned}
\mathbb{E}[X_t^m|X_0]&=\begin{cases}e^{-2wkt}(2k)^{-w}
 \sum_{l=0}^{w}a[l, w](e^{2kt}-1)^{w - l}\gamma^{2(w - l)}k^l
   X_0^{2l},\quad\,\,\,\,\, w=m/2,\qquad \,\,\,\text{ for } m \text{ even}
   \\e^{-(2w+1)kt}(2k)^{-w}
   \sum_{l=0}^{w}b[l, w](e^{2kt}-1)^{w - l}\gamma^{2 (w - l)}k^l
     X_0^{2 l+1}, \quad w=(m-1)/2, \text{ for } m \text{ odd}
\end{cases},\\
&a[l,w] := 2^l\prod_{q=l+1}^w\frac{q (2 q - 1)}{q - l}
   ,\quad b[l,w] := 2^l\prod_{q=l}^{w-1}\frac{(1 + q) (3 + 2 q)}{(q - l + 1)}.
\end{aligned}
\end{equation}

For verification, the expressions for $\mathbb{E}[X_t^{2m}]$ in the case of calibrated initial data (Eq.~\eqref{eq:expcalibrated}) are retrieved when taking $X_0=0$. Using conditional expectation on Eq.~\eqref{eq:binomial} gives

\begin{equation}
\label{eq:expxtmx0}
    \mathbb{E}\big[\mathbb{E}[(X_t-X_0)^{2n}|X_0]\big]=\sum_{m=0}^{2n}\binom{2n}{m}(-1)^{2n-m}\mathbb{E}[X_0^{2n-m}\mathbb{E}[X_t^m|X_0]].
\end{equation}
Furthermore, the normal distribution of $X_0$ gives
\begin{equation}
\label{eq:normaldis}
    \mathbb{E}[X_0^q]=\begin{cases} \left(\frac{\gamma}{\sqrt{2k}}\right)^q(q-1)!!, \quad q \text{ even }
   \\0, \quad q \text{ odd }.
\end{cases}
\end{equation}
Combining Eq.~\eqref{eq:expxtm} and Eq.~\eqref{eq:normaldis}, and filling into Eq.~\eqref{eq:expxtmx0} gives
\begin{equation}
    \mathbb{E}[\mathbb{E}[(X_t-X_0)^{2n}|X_0]]=(2 n - 1)!!\left(\frac{\gamma^2}{k}e^{-kt}(e^{kt}-1)\right)^n,
\end{equation}
which, when used in Eq.~\eqref{eq:expcos}, results in
\begin{equation}
    \mathbb{E}[\cos(\alpha(X_t-X_0))]=\exp\left(-\alpha^2 \frac{\gamma^2}{2k} e^{-k t} (e^{k t}-1)\right).
\end{equation}

\section{Multi-qubit noise}
\label{app:multiq}
This section proves a lemma regarding the factoring of fidelity distributions of $n$-qubit systems, which have a product initial state and evolve under the same noise source.
\begin{lemma}[Factoring fidelity for pure states]
Let the state $\psi_{(n)}$ of a $n$-qubit system evolve according to
\begin{equation}
    \di \psi_{(n)}=-iH_n\psi_{(n)}\di t-\frac{1}{2}S_n^2\psi_{(n)}\di [X]_t-iS_n\psi_{(n)} \di X_t,\quad \psi_{(n)}(0)=\bigotimes_{j=1}^n \psi_{j0},
\end{equation}
where the Hamiltonian $H_n$ and the noise operator $S_n$ take the sum form
\begin{equation}
\begin{aligned}
    H_n&=\sum_{j=1}^n A_j,\quad A_j=I^{\otimes (j-1)}\otimes \tilde{A}_j \otimes I^{\otimes (n-j)},\quad \tilde{A_j}=\tilde{A_j}^\dagger,\\
    S_n&=\sum_{j=1}^n Q_j,\quad Q_j=I^{\otimes (j-1)}\otimes \tilde{Q}_j \otimes I^{\otimes (n-j)},\quad \tilde{Q_j}=\tilde{Q_j}^\dagger.
\end{aligned}
\end{equation}
Let $\phi_{(n)}$ be the noiseless target state. Then 
\[
	|\phi_{(n)}^\dagger \psi_{(n)}|^2=\prod_{j=1}^n |\phi_j^\dagger \psi_j|^2,
\] where $\psi_j$ is a 1-qubit state evolving according to the SSE
\begin{equation}
        \di \psi_j=-i\tilde{A}_j\psi_j\,\di t-\frac{1}{2} \tilde{Q}_j^2\psi_j \,\di [X]_t-i \tilde{Q}_j \psi_j \,\di X_t , \quad \psi_j(0)=\psi_{j0},
\end{equation}
and $\phi_j$ is its corresponding noiseless target state. 
\end{lemma}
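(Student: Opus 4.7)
The plan is to prove the stronger statement that the state itself factors, $\psi_{(n)}(t) = \bigotimes_{j=1}^n \psi_j(t)$, from which the fidelity factorization is immediate since $\phi_{(n)} = \bigotimes_j \phi_j$ (by the same argument applied to the deterministic Schr\"odinger equation, which is the special case $\tilde Q_j = 0$). The approach is to define the candidate solution $\chi_{(n)}(t) := \bigotimes_{j=1}^n \psi_j(t)$ and to show, via It\^o's product rule, that $\chi_{(n)}$ satisfies the same SSE as $\psi_{(n)}$ with the same initial condition; uniqueness of solutions to the SSE then gives $\chi_{(n)} = \psi_{(n)}$.

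The key computational step is the application of It\^o's product rule for tensor products of semimartingales driven by a \emph{common} noise source $X_t$ with quadratic variation $[X]_t$. I would proceed by induction on $n$, with $n=1$ trivial. For the step from two factors to their tensor product, one uses
\begin{equation}
    \di(\psi_1\otimes\psi_2) = (\di\psi_1)\otimes\psi_2 + \psi_1\otimes(\di\psi_2) + (\di\psi_1)\otimes(\di\psi_2),
\end{equation}
where the cross term arises from the nontrivial quadratic covariation $(\di X_t)^2 = \di[X]_t$. Substituting the single-qubit SSEs gives a $\di X_t$ coefficient equal to $-i(\tilde Q_1\psi_1)\otimes\psi_2 - i\psi_1\otimes(\tilde Q_2\psi_2) = -i(Q_1+Q_2)(\psi_1\otimes\psi_2)$, a $\di t$ coefficient $-i(A_1+A_2)(\psi_1\otimes\psi_2)$ matching the Hamiltonian part, and a $\di[X]_t$ coefficient whose three contributions, $-\tfrac12 \tilde Q_1^2\psi_1\otimes\psi_2$, $-\tfrac12\psi_1\otimes\tilde Q_2^2\psi_2$, and $-(\tilde Q_1\psi_1)\otimes(\tilde Q_2\psi_2)$, are recognized as the expansion of $-\tfrac12(Q_1+Q_2)^2(\psi_1\otimes\psi_2)$ using $[Q_1,Q_2]=0$ (which holds because $Q_1$ and $Q_2$ act on disjoint tensor factors). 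Hence $\chi_{(2)}$ solves the 2-qubit SSE. The inductive step from $n-1$ to $n$ is identical in form, treating $\bigotimes_{j<n}\psi_j$ and $\psi_n$ as the two factors and observing that $H_{n-1}\otimes I$ commutes with $I^{\otimes(n-1)}\otimes \tilde A_n$ at the level of the sum and likewise for $S_{n-1}\otimes I$ and $I^{\otimes(n-1)}\otimes\tilde Q_n$.

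Once $\psi_{(n)} = \bigotimes_j \psi_j$ is established, the factorization of the fidelity follows from the elementary identity $\bigl(\bigotimes_j \phi_j\bigr)^\dagger\bigl(\bigotimes_j \psi_j\bigr) = \prod_j \phi_j^\dagger \psi_j$, so $|\phi_{(n)}^\dagger\psi_{(n)}|^2 = \prod_j |\phi_j^\dagger\psi_j|^2$.

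The main obstacle will be the careful bookkeeping of the $\di[X]_t$ term in the tensor-product It\^o rule: one must verify that the three pieces (two diagonal diffusion corrections and one genuine quadratic-covariation cross term) precisely reconstitute $-\tfrac12 S_n^2$ acting on the product state. This relies essentially on the fact that a \emph{single} common noise process drives all qubits, so that $\di X_t^{(i)}\cdot \di X_t^{(j)} = \di[X]_t$ for all $i,j$; had the qubits been driven by independent copies of $X_t$, the cross term would vanish and the identification would fail, which is consistent with the physical intuition mentioned in the main text that the factorization reflects all qubits evolving under the \emph{same} noise realization. A minor secondary point is to ensure uniqueness of the SSE solution in the multi-qubit setting, which follows from the Lipschitz character of the drift and diffusion coefficients (they are linear in $\psi$ with bounded operator coefficients).
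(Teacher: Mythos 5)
Your proposal is correct and follows essentially the same route as the paper: prove by induction, via It\^o's product rule, that $\bigotimes_j\psi_j$ satisfies the $n$-qubit SSE with the matching initial condition, conclude $\psi_{(n)}=\bigotimes_j\psi_j$ (and likewise for $\phi_{(n)}$), and read off the fidelity factorization. If anything, you supply more detail than the paper does, in particular the explicit verification that the quadratic-covariation cross term $-(\tilde Q_1\psi_1)\otimes(\tilde Q_2\psi_2)\,\di[X]_t$ combines with the two diagonal corrections to reconstitute $-\tfrac12 S_n^2$, and the appeal to uniqueness of solutions, both of which the paper leaves implicit.
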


\begin{proof}
Squaring  $S_n$, we find
\begin{equation}
    S_n^2=\sum_{j=1}^n Q_j^2+\sum_{j=1}^n\sum_{k\neq j}^n Q_{jk},\quad Q_{jk}=I^{\otimes (j-1)}\otimes \tilde{Q}_j \otimes I^{\otimes (k-j-1)} \otimes \tilde{Q}_k \otimes I^{\otimes (n-k)}.
\end{equation}
By induction we prove the $\psi_{(n)}=\bigotimes_{j=1}^n \psi_j$. 

\smallskip

This holds trivially for $n=1$. Now, assume the statement holds for $n$. Then for $n+1$, we use It\^o's formula to obtain
\begin{equation}
\begin{aligned}
    \di\bigotimes_{j=1}^{n+1} \psi_j
    =-iH_{n+1}\bigotimes_{j=1}^{n+1} \psi_j\,\di t-iS_{n+1}\bigotimes_{j=1}^{n+1} \psi_j\,\di X_t-\frac{1}{2}S_{n+1}\bigotimes_{j=1}^{n+1} \psi_j \,\di [X]_t.
\end{aligned}
\end{equation}
As we have equal initial conditions, we indeed find $\psi_{(n+1)}=\bigotimes_{j=1}^{n+1} \psi_j$ for all $n\in\mathbb{N}$.

\smallskip

Analogously, we find $\phi_{(n)}=\bigotimes_{j=1}^{n} \phi_j$. For the fidelity, we then deduce
\begin{equation}
    |\phi_{(n)}^\dagger \psi_{(n)}|^2=\phi_{(n)}^\dagger \psi_{(n)}\psi_{(n)}^\dagger\phi_{(n)}=\left(\bigotimes_{j=1}^{n} \phi_j^\dagger \bigotimes_{j=1}^{n} \psi_j\right)\left(\bigotimes_{j=1}^{n} \psi_j^\dagger \bigotimes_{j=1}^{n} \phi_j\right)=\prod_{j=1}^n |\phi_j^\dagger\psi_j|^2. 
\end{equation}
\end{proof}

Note that these results do not only hold for qubits, but for any finite ensemble of finite dimensional quantum systems.

\section{Stochastic Integration}
\label{app:stochint}
Numerical verification of the analytic results is performed using stochastic integration. To solve for the noise and state simultaneously, we define $\Y:=(\psi,X)$. For the (OU) process, this gives the differential equation
\begin{equation}
    \di \Y=a(\Y)\,\di t+b(\Y)\,\di W_t,\quad a(\Y)=
\begin{pmatrix}
-iH+ikXS-\frac{\gamma^2}{2}S^\dagger S
  & \rvline & \mathbf{0} \\
\hline
  \mathbf{0}^\text{T} & \rvline &
-k
\end{pmatrix}\Y,\quad b(\Y)=\begin{pmatrix}
-i\gamma S
  & \rvline & \mathbf{0} \\
\hline
  \mathbf{0}^\text{T} & \rvline &
\frac{1}{X}
\end{pmatrix}\Y.
\end{equation}
These equations can be solved discretely over time steps $\Delta t$ using a numerical integration scheme. One possible scheme is the explicit (weak) first-order Euler-Maruyama scheme \cite{stochasticintegration1} as
\begin{equation}
    \Y_{n+1}=\Y_n+a(\Y_n)\Delta t+b(\Y_n)\mathcal{N}\sqrt{\Delta t},
\end{equation}
where $\mathcal{N}$ is a standard normal distribution. Throughout this work, convergences issues persisted using stochastic integration schemes for (OU) noise at higher evolution times, likely due to the non-Lipschitz $1/X$ dependence \cite{platen} and possibly the non-Euclidean space in which the states live. These convergence issues are absent for white noise and always occur below fidelities of $F=0.95$, which is not a relevant regime for pragmatic quantum computing and therefore not a deliberating issue. We have found that these issues are mitigated (but not resolved) when using a higher-order scheme such as the explicit (weak) second-order scheme due to Platen \cite{stochasticintegration2,platen}, which is used throughout this work. This scheme is given by 
$$
\begin{aligned}
& \Y_{n+1}=\Y_n+\frac{1}{2}\big(a(\bar{\Upsilon})+a(\Y_n)\big) \Delta t+\frac{1}{4}\big(b\left(\bar{\Upsilon}^{+}\right)+b\left(\bar{\Upsilon}^{-}\right)+2 b(\Y_n)\big) \mathcal{N}\sqrt{\Delta t} +\frac{1}{4}\big(b(\tilde{\Upsilon}^{+})-b(\widetilde{\Upsilon}^{-})\big)\left(\mathcal{N}^2-1\right) \sqrt{\Delta t}, 
\end{aligned}
$$
with supporting values $\bar{\Upsilon}=\Y_n+a(\Y_n) \Delta t+b(\Y_n) \mathcal{N}\sqrt{\Delta t}$ and $\bar{\Upsilon}^{ \pm}=\Y_n+a(\Y_n) \Delta t \pm b(\Y_n) \sqrt{\Delta t}$.

\section{Second-order Pauli Approximation}
\label{app:secondorder}

For the second-order results as in Fig.~\ref{fig:orderofapprox},  the closed system for the vector $\V$
\begin{equation}
\begin{aligned}
    \V=\bigg[|\phi^\dagger\psi|^2,|\phi^\dagger S\psi|^2,X(t)(\phi^\dagger S\psi\psi^\dagger \phi-\phi^\dagger \psi\psi^\dagger S\phi), X^2(t)|\phi^\dagger\psi|^2,X^2(t)|\phi^\dagger S\psi|^2,X^3(t)(\phi^\dagger S\psi\psi^\dagger \phi-\phi^\dagger \psi\psi^\dagger S\phi)\bigg],
\end{aligned}
\end{equation}
is found to be
\begin{equation}
\dot{\V}=\left[
\begin{array}{cccccc}
 -\gamma^2 & \gamma^2 & ik & 0 & 0 & 0 \\
 \gamma^2 & - \gamma^2 & -ik & 0 & 0 & 0 \\
 -2i\gamma^2 & 2i\gamma^2 & -(k+2 \gamma^2) & 2ik & -2ik & 0 \\
 \gamma^2 & 0 & -2 i\gamma^2 & -(2k+\gamma^2) & \gamma^2 & i k \\
 0 & \gamma^2 & 2 i\gamma^2 & \gamma^2 & -(2k+\gamma^2) &  -i k \\
 0 & 0 & 2\gamma^2 & 2i(k\mathbb{E}[X_t^2]-2\gamma^2) & -2i(k\mathbb{E}[X_t^2]-2\gamma^2) & -(3k+2 \gamma^2) \\
\end{array}
\right]\V, \quad \V(0)=\left[
\begin{array}{c}
 1  \\
 \S_0^2 \\
 0\\
 0 \\
 0\\
 0 
\end{array}
\right],
\end{equation}
with $\S_0=\phi_0^\dagger S\phi_0$. This system is solved numerically to retrieve the expectation value of the fidelity.

\section{Non-commuting}
For the non-commuting system we consider $H=\alpha \sigma_1$, $S=\sigma_2$, giving $[H,S]=2i\alpha\sigma_3$, where $\{\sigma_1,\sigma_2,\sigma_3\}$ can be any cyclic permutation of $\{\sigma_X,\sigma_Y,\sigma_Z\}$. The group structure of the Pauli matrices allows us to close the system of ODEs for the vector
\label{app:noncommute}
\begin{equation}
\begin{aligned}
\V = \begin{bmatrix}
	\F \\ |\phi^\dag \sigma_1 \psi|^2 \\ |\phi^\dag \sigma_2 \psi|^2 \\ |\phi^\dag \sigma_3 \psi|^2 \\ i(\phi^\dag \sigma_1 \psi \psi^\dag \phi - \phi^\dag \psi \psi^\dag \sigma_1 \phi) \\ i(\phi^\dag \sigma_2 \psi \psi^\dag \phi - \phi^\dag \psi \psi^\dag \sigma_2 \phi) \\ i(\phi^\dag \sigma_3 \psi \psi^\dag \phi - \phi^\dag \psi \psi^\dag \sigma_3 \phi) \\ \phi^\dag \sigma_2 \psi \psi^\dag \sigma_1 \phi + \phi^\dag \sigma_1 \psi \psi^\dag \sigma_2 \phi \\ \phi^\dag \sigma_3 \psi \psi^\dag \sigma_1 \phi + \phi^\dag \sigma_1 \psi \psi^\dag \sigma_3 \phi \\ \phi^\dag \sigma_3 \psi \psi^\dag \sigma_2 \phi + \phi^\dag \sigma_2 \psi \psi^\dag \sigma_3 \phi
\end{bmatrix}\in\mathbb{R}^{10}.
\end{aligned}
\end{equation}
We find that the matrix $A$ splits in a commutator part $A_c$ and a noise part $B^2$, resulting in the system
\begin{equation}
    \di \V=\alpha A_c \V\,\di t+\frac{1}{2}\gamma^2 B^2\V\,\di t+ B\V\,\di X_t,
\end{equation}
with
\begin{equation}
A_c = 
    \begin{pmatrix}
         0 & 0 & 0 &  0 & 0 & 0 & 0 & 0 & 0 & 0 \\
         0 & 0 & 0 &  0 & 0 & 0 & 0 & 0 & 0 & 0 \\
         0 & 0 & 0 &  0 & 0 & 0 & 0 & 0 & 0 & -2 \\
         0 & 0 & 0 &  0 & 0 & 0 & 0 & 0 & 0 & 2 \\
         0 & 0 & 0 &  0 & 0 & 0 & 0 & 0 & 0 & 0 \\
         0 & 0 & 0 &  0 & 0 & 0 & -2 & 0 & 0 & 0 \\
         0 & 0 & 0 &  0 & 0 & 2 & 0 & 0 & 0 & 0 \\
         0 & 0 & 0 &  0 & 0 & 0 & 0 & 0 & -2 & 0 \\
         0 & 0 & 0 &  0 & 0 & 0 & 0 & 2 & 0 & 0 \\
         0 & 0 & 4 & -4 & 0 & 0 & 0 & 0 & 0 & 0 \\
    \end{pmatrix},\qquad
B =
    \begin{pmatrix}
        0 & 0 & 0 & 0 & 0 & -1 & 0 & 0 & 0 & 0 \\
         0 & 0 & 0 & 0 & 0 & 0 & 0 & 0 & 1 & 0 \\
         0 & 0 & 0 & 0 & 0 & 1 & 0 & 0 & 0 & 0 \\
         0 & 0 & 0 & 0 & 0 & 0 & 0 & 0 & -1 & 0 \\
         0 & 0 & 0 & 0 & 0 & 0 & 1 & -1 & 0 & 0 \\
         2 & 0 & -2 & 0 & 0 & 0 & 0 & 0 & 0 & 0 \\
         0 & 0 & 0 & 0 & -1 & 0 & 0 & 0 & 0 & -1 \\
         0 & 0 & 0 & 0 & 1 & 0 & 0 & 0 & 0 & 1 \\
         0 & -2 & 0 & 2 & 0 & 0 & 0 & 0 & 0 & 0 \\
         0 & 0 & 0 & 0 & 0 & 0 & 1 & -1 & 0 & 0 \\
    \end{pmatrix}.
\end{equation}
Solving this system as in Sec.~\ref{sec:odes} is not possible since $[A_c,B]\ne 0$. However, a perturbation technique can be used when the Hamiltonian strength is much larger than the noise, i.e.\ $\varepsilon^2 :=\gamma^2/\alpha\ll 1$, which holds for many realistic systems. Rescaling time $\widehat t = t/\alpha$, we obtain a rescaled system of equations for $\widehat\V= \V_{t/\alpha}$ and $\widehat X = X_{t/\alpha}$,
\begin{equation}
    \di \widehat\V= A_c \widehat\V\,\di \widehat t + \frac{1}{2}\varepsilon^2 B^2\widehat\V\,\di \widehat t + B\widehat \V\,\di \widehat X_t,
\end{equation}
where $\widehat X$ has the quadratic variation $[\widehat X]_{\widehat t} = \varepsilon^2 \widehat t$. \\

In the following, we assume w.l.o.g.\ that $\alpha=1$ and $\gamma^2\ll 1$. Otherwise, we simply rescale time as above and write $\V$ in place of $\widehat\V$. Setting $\U:=\exp(-A_c t)\V$ to get
\begin{equation}
\label{eq:gho}
    \di\U=\frac{1}{2}\gamma^2 D^2(t)\U\, \di t+D(t)\U\, \di X_t,
\end{equation}
with 
\begin{equation}
D(t)=\exp(-A_c t)B\exp(A_c t)=\cos(2 t) B-\frac{1}{2}\sin(2 t)[A_c,B],
\end{equation}
where the second equality can be established according to a generalized harmonic oscillator \cite{introquantcalc}. Although \eqref{eq:gho} cannot be solved explicitly, an approximation to its solution can be found via the stochastic Magnus expansion \cite{stochasticmagnus}, which states that 
\begin{equation}
    \U=\exp(\L)=\exp\left(\sum_{n,r} \gamma^{2n-r}\L^{(r,n-r)}\right),
\end{equation}
where $\L^{(r,n-r)}$ are the expansion terms up to order $\gamma^2$ in $\L$. For general processes we find
\begin{equation}
    \U_t=\exp\left(\int_{0}^t D(s)\,\di X_s+O(\gamma^3)\right),
\end{equation}
which, for white noise, (WN) gives 
\begin{equation}
    \mathbb{E}[\U_t]\approx\exp\left(\frac{\gamma^2}{2}\int_{0}^t D^2(s)\,ds\right).
\end{equation}
For the fidelity under white noise, we obtain the approximation
\begin{equation}
    \F_t=\mathbb{E}[\U_t]\approx\frac{1}{2}+\frac{1}{2} \\C_1^2 e^{-2 \gamma^2 t}+\frac{1}{2} e^{-\gamma^2 t} \Bigl(\sinh (u) ((\C_2^2-\C_3^2) \cos (2a t)-2 \C_2 \C_3
   \sin (2a t))+\left(\C_2^2+\C_3^2\right) \cosh (u)\Bigr).
\end{equation}
For the (OU) process with $k>0$, the exponential can not be expressed analytically. Instead, one could approximate it by expanding the exponential up to the second order to find

\begin{align*}
    \mathbb{E}[\U_t] &\approx I+\frac{\gamma^2}{2}\int_{0}^t  D^2(s) ds \\
    &\qquad - \frac{\gamma^2}{2} k\int_0^t \left\{e^{-ks} D(s),\int_{0}^{s}e^{ks'}D(s') ds' \right\}ds+\frac{\gamma^2}{4}k\int_{0}^t \left\{e^{-2ks}D(s),\int_{0}^s e^{2ks'}D(s')ds'\right\}ds.
\end{align*}
\end{document}